\titleformat*{\paragraph}{\bf}
\newcommand*\diff{\mathop{}\!\mathrm{d}}
\newcommand*\Diff{\mathop{}\!\mathrm{D}}
\theoremstyle{definition}
\newtheorem{definition}{Definition}
\newtheorem*{example}{Example}
\theoremstyle{plain}
\newtheorem{theorem}{Theorem}
\newtheorem{corollary}{Corollary}
\newtheorem{lemma}{Lemma}
\theoremstyle{remark}
\newtheorem*{remark}{Remark}
\begin{document}

\title{Null Lagrangians of non-local field theories}

\author{Kevin Thieme}
\altaffiliation{Current affiliation: Physik-Institut, Universität Zürich, Winterthurerstrasse 190, 8057 Zürich, Schweiz, kevin.thieme@physik.uzh.ch}

\affiliation{Max-Planck-Institut für Gravitationsphysik (Albert-Einstein-Institut),
~\\
Am Mühlenberg~1, 14476~Potsdam-Golm, Deutschland}

\affiliation{Eidgenössische Technische Hochschule Zürich, Departement Physik,
~\\
Otto-Stern-Weg~1, 8093~Zürich, Schweiz}




\begin{abstract}
This manuscript provides a characterisation of the equivalence class of classical smooth Lagrangian densities that involve terms depending on two distinct points of the underlying Euclidean base space of the theory. Theories of this type are referred to as non-local field theories, which are of particular interest in the group field theory approach to quantum gravity. The notion of equivalence of Lagrangian densities is set by physical indistinguishability by means of their equations of motion whose derivation is shown briefly. We expect our results to give a more comprehensive view on the non-local Noether theorem regarding divergence symmetries. 
\end{abstract}

\maketitle

\section{Introduction}

Locality is at the very heart of our conception of fundamental particle interactions, in particular, from the mathematical point of view of contact interaction terms involving contractions of quantum fields evaluated at the same point in spacetime. However, diverging from this fundamental description by integrating out microscopic degrees of freedom, one may encounter non-localities. One of the most evident examples here is the Coulomb interaction between electrically charged particles. Were we not interested in the fundamental mediation by photons nor aware of even their existence, we would describe the phenomenon by an effective interaction ``at a distance'' whose essence is the electrical potential with its well-known inverse distance proportionality. In this regard, a Lagrangian density is referred to as \textit{non-local} (sometimes also \textit{multi-local}) if it is a functional on possibly $N$ copies of a single jet space, assuming that each of the fundamental fields depends on only one point of the base space~\cite{Kegeles2016a}. We will see in the following that non-locality is closely connected with the fact that the corresponding Euler-Lagrange equations are in integro-differential form.

This type of non-locality is in particular encountered in the group field theory approach to quantum gravity~\cite{Oriti2009,Oriti2013, Oriti2014, Oriti2015} in which case it emerges in the combinatorially non-local structure of the quanta~\cite{Oriti2012}. 

A Noether theorem was proposed for theories with Lagrangian densities of the described kind~\cite{Kegeles2016a, Kegeles2016b}. However, it did not cover the analogue of the divergence symmetry present in the well-known local theorem, i.e.~the symmetry on the level of the Lagrangian densities is still unknown. This symmetry would not only complement the non-local Noether theorem but also give rise to the notion of ``equivalent'' Lagrangian densities in the non-local case. Equivalent Lagrangian densities differ by terms that do not affect the equations of motion. These terms are called ``null Lagrangians''. 
For instance, let us suppose we describe a physical system in a connected volume, the experimental domain, of (open) size $\Omega \subset \mathbb{R}$ by a local classical scalar field theory
\begin{equation*}
S_{\mathrm{loc}}=\int \limits_{\Omega} L_{\mathrm{loc}}(x,\psi(x)) \, \diff x.
\end{equation*} 
However, the same system could as well be described by  
\begin{align}
S_{\mathrm{loc + nloc}}=&\int \limits_{\Omega} L_{\mathrm{loc}}(x,\psi(x)) \, \diff x + \notag \\
&\int \limits_{\Omega \times \Omega} L_{\mathrm{nloc}}(x,\phi(x),y,\phi(y)) \, \diff x \, \diff y \notag
\end{align}
if the non-local Lagrangian density $L_{\mathrm{nloc}}$ is null on $\Omega \times \Omega$ for any field $\phi$, i.e.~its equation of motion vanishes identically. 
Based on physical criteria, we thus cannot prefer one of the descriptions over the other. This implies that we can modify any fundamental theory in such a way that it gets formally non-local.

While null Lagrangians had been studied for the local case and identified to be total divergences~\cite{Edelen1962, Ball1981, Olver1983, Olver1985, Olver1986, Olver1988}, we will characterise them for the non-local classical case. For the sake of simplicity, we will restrict ourselves to Lagrangian densities on two copies of a single jet space and will, in the following, simply refer to the ``Lagrangian density'' as ``Lagrangian''. 

Our paper is structured as follows. In Section~\ref{Sec:Non-local_Euler-Lagrange_equations} we derive the non-local equations of motion, that are the Euler-Lagrange equations, using common variational principles. Based on this, we can give a precise definition of a non-local null Lagrangian. Sections~\ref{Sec:Null_Lagrangians_for_specific} and~\ref{Sec:Null_Lagrangians_for_all} are the core of this paper. Here, we prove necessary and sufficient conditions for Lagrangians to be null on a certain domain of the base space of the theory and on all its subdomains, respectively. We summarise and interpret our findings in Section~\ref{Sec:Conclusions}. 

\section{Non-local Euler-Lagrange equations}
\label{Sec:Non-local_Euler-Lagrange_equations}
 
Let $X \simeq \mathbb{R}^{p}$ and $U \simeq \mathbb{R}^{q}$. Consider a Lagrangian as smooth functional on the product of the $n$th and $m$th order jet space of the basic space $X \times U$,
\begin{equation}
L_{12}: \, J^{n} \times J^{m} \rightarrow \mathbb{R},
\end{equation}
where the subscript~$12$ denotes the order of the jet spaces in the Cartesian product\footnote{Introduce the permutation operator $\mathcal{P}: J^{n} \times J^{m} \rightarrow J^{m} \times J^{n}$ such that $L_{21}=\mathcal{P} \circ L_{12}$ and $L_{12}(x,u^{(n)},y,v^{(m)})=L_{21}(y,v^{(n)},x,u^{(m)})$.}.
For an open, connected region $\Omega \subset X$ with smooth boundary $\partial \Omega$, consider a smooth real-valued function $\phi: \, \Omega \rightarrow U$ and the action functional 
\begin{equation}
\label{Eq:Variational_problem}
S[u]=\int \limits_{\Omega \times \Omega} L_{12}(x,u^{(n)},y,v^{(m)}) \, \diff x \, \diff y.
\end{equation}
Its smooth extrema $u=\phi(x)$, $v=\phi(y)$ (with $u^{(n)}$ and $v^{(m)}$ being the $n$th and $m$th prolongation of $\phi$ with respect to $x$ and $y$) are characterised by the Euler-Lagrange equations that follow from a variation with a smooth function $\eta$ with compact support in $\Omega$, i.e.~$\left. \eta \right|_{\partial \Omega}=0$ (cf.~\cite[p.~249f.]{Olver1986} for the local case),
\begin{equation}
0\overset{!}{=}\left. \frac{\mathrm{d}}{\mathrm{d} \varepsilon} S[\phi+\varepsilon\eta] \right|_{\varepsilon=0} \label{Eq:Derivation_EL_equations}
\end{equation}
for $\varepsilon \in \mathbb{R}$. The fundamental lemma of the calculus of variations yields the Euler-Lagrange equations for our non-local theory,
\begin{align}
\label{Eq:Euler_Lagrange_equation}
&\int \limits_{\Omega} \left\{ E^{1}(L_{12})\left(y, v^{(2n)}, x, u^{(m)}\right)+ \right. \notag\\
&\hphantom{\int \limits_{\Omega} ..} \left. E^{2}(L_{12})\left(x, u^{(n)}, y, v^{(2m)}\right) \right\} \, \diff x =0.
\end{align}
Here, the Euler operator corresponding to the first jet space is
\begin{equation}
\label{Def:Euler_operator}
E^{1} \coloneqq \sum_{\alpha=1}^{q}\sum_{J}(-\Diff^{1})_{J} \frac{\partial}{\partial u_{J}^{\alpha}},
\end{equation}
where we defined
\begin{equation}
(-\Diff^{1})_{J}\coloneqq(-1)^{k}\Diff_{J}^{1}\coloneqq(-\Diff^{1}_{j_{1}})\cdots(-\Diff^{1}_{j_{k}})
\end{equation}
for a multi-index $J=(j_{1}, \ldots, j_{k})$ of order $k$ with entries ranging from $1$ to $p$ and the total derivative with respect to coordinate $i$ is~\cite[p.~112]{Olver1986}
\begin{equation}
\label{Eq:DefTotDer}
\Diff_{i}^{1}\coloneqq \frac{\partial}{\partial x^{i}}+\sum_{\alpha=1}^{q}\sum_{J}u_{J,i}^{\alpha} \frac{\partial}{\partial u_{J}^{\alpha}}.
\end{equation} 
Note that our notation enables us to treat the action of $E^{1}$ on $L_{12}$ as an action of $E^{2}$ on $L_{21}$ and vice versa, 
\begin{align}
\label{Eq:E1L12_E2L21}
E^{1}(L_{12})\left(y,v^{(2n)},x,u^{(m)}\right)&= \notag \\
E^{2}(L_{21})&\left(x,u^{(m)},y,v^{(2n)}\right),
\end{align}
or
\begin{align}
\label{Eq:E2L12_E1L21}
E^{2}(L_{12})\left(x,u^{(n)},y,v^{(2m)}\right)&= \notag \\
E^{1}(L_{21})&\left(y,v^{(2m)},x,u^{(n)}\right),
\end{align}
respectively. Hence, using the linearity of the Euler operator, the integrand of Eq.~\eqref{Eq:Euler_Lagrange_equation} becomes the left side of
\begin{equation}
\label{Eq:Antisymcond1}
E^{2}(L_{12}+L_{21})\left(x,u^{(\max\{n,m\})},y,v^{(\max\{2n,2m\})}\right)=0.
\end{equation} 
Denoting shortly an integration of an integrable function $f_{12}$ over a region $\Omega \subset X$ corresponding to the first jet space by $\mathcal{I}_{\Omega}^{1}(f_{12})(y,v^{(m)}) \coloneqq \int \limits_{\Omega} f_{12}(x,u^{(n)},y,v^{(m)}) \, \diff x$, the Euler-Lagrange equations~\eqref{Eq:Euler_Lagrange_equation} appear
\begin{equation}
\label{Eq:Euler_Lagrange_equation_new}
\mathcal{I}_{\Omega}^{1}(E^{2}(L_{12}+L_{21})) \left(y,v^{(\max\{2n,2m\})} \right)=0.
\end{equation}
Written in this form, it is apparent that all antisymmetric terms of the Lagrangian (under the action of $\mathcal{P}$) vanish identically in the integrand and that all observable effects are defined by its symmetric parts.

The aim of this paper is to find those functionals $L_{12}$ that satisfy the Euler-Lagrange equations independently of the field configuration $u$,~$v$, i.e.~we are looking for Lagrangians for which every function $u=\phi(x)$, $v=\phi(y)$ extremises the functional~\eqref{Eq:Variational_problem}. These are called \textit{null Lagrangians}:
\begin{definition}
A Lagrangian $L_{12}=L_{12}(x,u^{(n)},y,v^{(m)})$ is \textit{null on $\Omega$} if it satisfies Eq.~\eqref{Eq:Euler_Lagrange_equation} for all $u$,~$v$ and their various derivatives at all points $x$,~$y$ in $\Omega$. Any two Lagrangians $L'_{12}$ and $L''_{12}$ are called \textit{equivalent} ($L'_{12} \sim L''_{12}$) if they differ by a null Lagrangian $L_{12}=L'_{12}-L''_{12}$.
\end{definition} 

\section{Null Lagrangians on particular $\boldsymbol{\Omega}$}
\label{Sec:Null_Lagrangians_for_specific}

First, we investigate Eq.~\eqref{Eq:Euler_Lagrange_equation_new} in its complete form to characterise null Lagrangians on a certain domain $\Omega$. In general, the fields will have to satisfy specific boundary conditions and a null Lagrangian will be tailored to that domain and not be null on its subdomains.
\begin{theorem}
\label{Th:Nulllagrangian4}
A non-local Lagrangian $L_{12}=L_{12}(x,u^{(n)},y,v^{(m)})$ is null on $\Omega$ if and only if there is a $p$-tuple $P$ of smooth functions such that
\begin{equation}
\label{Eq:Nulllagrangian_nloc_4}
\mathcal{I}_{\Omega}^{1}(L_{12}+L_{21})=\mathrm{Div} \, P (y,v^{(\max\{n,m\})}).
\end{equation}
\end{theorem}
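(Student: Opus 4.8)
The plan is to collapse the non-local condition onto the classical, one-point characterisation of null Lagrangians as total divergences. By Eq.~\eqref{Eq:Euler_Lagrange_equation_new}, $L_{12}$ is null on $\Omega$ exactly when $\mathcal{I}_{\Omega}^{1}(E^{2}(L_{12}+L_{21}))=0$ holds for every admissible field. The observation I would build on is that the Euler operator $E^{2}$ acts only on the second slot $(y,v)$: each total derivative entering it differentiates with respect to $y^{j}$ and the $v$-jet coordinates and never touches $x$ or the $u$-jet coordinates. Since the integration $\mathcal{I}_{\Omega}^{1}$ runs over $x$ with $\Omega$ fixed, I expect $E^{2}$ and $\mathcal{I}_{\Omega}^{1}$ to commute, so that setting $M(y,v^{(\max\{n,m\})}):=\mathcal{I}_{\Omega}^{1}(L_{12}+L_{21})$ turns the null condition into the single statement $E^{2}(M)=0$.

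First I would make the commutation precise. Writing $\Diff_{j}^{2}=\partial/\partial y^{j}+\sum_{\alpha,J}v^{\alpha}_{J,j}\,\partial/\partial v^{\alpha}_{J}$ for the second-slot total derivative, the explicit piece $\partial/\partial y^{j}$ passes under $\int_{\Omega}\diff x$ by differentiation under the integral sign (licensed by smoothness of $L_{12}$ and the fixed, regular domain $\Omega$, so that no $\partial\Omega$-contribution appears), while the remaining pieces involve only $v$-derivatives and hence trivially commute with an $x$-integration. Iterating over the multi-index $J$ and summing as in the definition~\eqref{Def:Euler_operator} of the Euler operator yields $E^{2}(\mathcal{I}_{\Omega}^{1}(\,\cdot\,))=\mathcal{I}_{\Omega}^{1}(E^{2}(\,\cdot\,))$, whence $E^{2}(M)=\mathcal{I}_{\Omega}^{1}(E^{2}(L_{12}+L_{21}))$.

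At this point, for each fixed field configuration on $\Omega$, $M$ is an ordinary smooth Lagrangian in the single set of variables $(y,v^{(\max\{n,m\})})$, and nullness of $L_{12}$ is equivalent to $E^{2}(M)=0$ holding identically in $v$ --- here I would use that, the domain being fixed, the $v$-jet at $y$ may be prescribed arbitrarily by a suitable choice of field, so that vanishing of the integral for all fields forces the pointwise vanishing of $E^{2}(M)$. I then invoke the classical local result~\cite{Olver1986}: a smooth Lagrangian is annihilated by the Euler operator if and only if it is a total divergence. The easy implication $E^{2}\circ\mathrm{Div}=0$ gives the converse direction immediately, since $M=\mathrm{Div}\,P$ forces $E^{2}(M)=0$ and hence, by the commutation, the vanishing of the Euler--Lagrange integrand. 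For the forward direction I would feed $M$ into the standard homotopy operator, which produces $P$ with $M=\mathrm{Div}\,P$; since that operator is built from $v$-slot operations and a $y$-integration, it too commutes with $\mathcal{I}_{\Omega}^{1}$, so $P=\mathcal{I}_{\Omega}^{1}(\tilde{P})$ for a $p$-tuple $\tilde{P}$ on $J^{n}\times J^{m}$, consistent with the stated form.

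The step I expect to be the main obstacle is precisely the commutation of $E^{2}$ with $\mathcal{I}_{\Omega}^{1}$ together with the careful translation of the quantifier ``for all $u,v$ and their derivatives'' into the genuinely pointwise identity $E^{2}(M)=0$ in the second jet. One must verify that integrating out the first slot really leaves a smooth function of $(y,v^{(\max\{n,m\})})$ to which the local theorem applies, and that differentiation under the integral introduces no spurious boundary terms from $\partial\Omega$ --- both controlled by smoothness and by the compact support of the variation $\eta$ used in~\eqref{Eq:Derivation_EL_equations}. Everything past that reduction is a direct appeal to the local characterisation of null Lagrangians.
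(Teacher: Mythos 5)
Your proposal is correct and follows essentially the same route as the paper: you commute $E^{2}$ with $\mathcal{I}_{\Omega}^{1}$ (this is precisely the paper's Lemma~\ref{Le:Exchange_IE}) and then apply the local equivalence~\eqref{Eq:Olvers_equivalence} of Olver to the reduced Lagrangian $\mathcal{I}_{\Omega}^{1}(L_{12}+L_{21})$ in the variables $(y,v^{(\max\{n,m\})})$. Your additional care about the quantifier ``for all $u,v$'' and the smoothness of the integrated Lagrangian only makes explicit what the paper leaves implicit.
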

\begin{remark}
The total divergence is given by
\begin{equation}
\mathrm{Div} \, P=\Diff_{1}P_{1}+\cdots+\Diff_{p}P_{p}.
\end{equation}
\end{remark}
\begin{lemma}
\label{Le:Exchange_IE}
We can exchange 
\begin{equation}
\mathcal{I}_{\Omega}^{k}(E^{l}(L_{12}))=E(\mathcal{I}_{\Omega}^{k}(L_{12})), \quad k,l \in \{ 1,2 \}, \, k \neq l. 
\end{equation}
\end{lemma}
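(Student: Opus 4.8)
The plan is to prove the representative case $k=1$, $l=2$; the complementary case $k=2$, $l=1$ then follows verbatim after applying the permutation $\mathcal{P}$ and relabelling the roles of the two jet spaces, exactly as in Eqs.~\eqref{Eq:E1L12_E2L21} and~\eqref{Eq:E2L12_E1L21}. It therefore suffices to establish
\[
\mathcal{I}_{\Omega}^{1}\bigl(E^{2}(L_{12})\bigr)=E\bigl(\mathcal{I}_{\Omega}^{1}(L_{12})\bigr).
\]
First I would expand $E^{2}$ into its defining building blocks in analogy with Eq.~\eqref{Def:Euler_operator}, i.e.~the fibre derivatives $\partial/\partial v_{J}^{\alpha}$ and the iterated total derivatives $(-\Diff^{2})_{J}$ built from $\Diff_{j}^{2}=\partial/\partial y^{j}+\sum_{\alpha}\sum_{J}v_{J,j}^{\alpha}\,\partial/\partial v_{J}^{\alpha}$. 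By linearity of both $E^{2}$ and $\mathcal{I}_{\Omega}^{1}$, the whole claim reduces to showing that each of these elementary operators commutes with the integration $\mathcal{I}_{\Omega}^{1}$, which acts only on the first-jet-space data by integrating out the variable $x$ (and the field arguments $u^{(n)}$ carried along with it).

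The central observation is that every operator constituting $E^{2}$ differentiates solely with respect to the second-jet-space data $(y,v^{(m)})$ and leaves $x$ and $u^{(n)}$ untouched. The fibre derivatives $\partial/\partial v_{J}^{\alpha}$ thus commute with $\int_{\Omega}\,\cdot\,\diff x$ because the integration variable $x$, the arguments $u^{(n)}$, and the coordinates $v_{J}^{\alpha}$ are mutually independent while $\Omega$ carries no $v$-dependence; the jet-shift part $\sum v_{J,j}^{\alpha}\,\partial/\partial v_{J}^{\alpha}$ of $\Diff_{j}^{2}$ commutes for the very same reason. The only genuinely analytic point---and the one I expect to be the main obstacle---is the base derivative $\partial/\partial y^{j}$ inside $\Diff_{j}^{2}$: moving it through the integral is precisely differentiation under the integral sign. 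This is licensed by the smoothness of $L_{12}$ together with the fact that $\Omega$ is a fixed region with no $y$-dependence, so the Leibniz rule applies with no boundary term.

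Once each factor has been pulled inside, I would recompose them to transport the entire operator $E^{2}$ through $\mathcal{I}_{\Omega}^{1}$. After the $x$-integration the resulting object is a functional of $(y,v^{(\ldots)})$ alone, on which $E^{2}$---assembled from the $\partial/\partial v_{J}^{\alpha}$ and the total $y$-derivatives---acts exactly as the ordinary single-jet Euler operator $E$ of the local theory. Identifying $E^{2}$ with $E$ on such functionals then yields the stated identity, completing the proof.
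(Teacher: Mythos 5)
Your proposal is correct and follows essentially the same route as the paper's proof: decompose the Euler operator into fibre derivatives $\partial/\partial v_{J}^{\alpha}$ and total derivatives as in Eqs.~\eqref{Def:Euler_operator} and~\eqref{Eq:DefTotDer}, then commute each elementary piece with $\mathcal{I}_{\Omega}^{k}$ using differentiation under the integral sign (justified by smoothness of $L_{12}$ and the fixed domain $\Omega$) and the independence of the second-jet-space coordinates from the integration variable. The paper states this more tersely, but your added observations---the reduction to one representative case via $\mathcal{P}$ and the explicit identification of $E^{l}$ with the local Euler operator $E$ after integration---are faithful elaborations of the same argument, not a different one.
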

\begin{proof}
Consider the definition of the Euler operator~\eqref{Def:Euler_operator} and of the total derivative~\eqref{Eq:DefTotDer}. For continuously differentiable functions, in particular for the Lagrangian, integrals and partial derivatives with respect to coordinates that are independent of the coordinate of integration commute. Coordinates like field derivatives in Eq.~\eqref{Eq:DefTotDer} that are independent of the coordinate of integration behave like constants. 
\end{proof}
\begin{proof}[Proof of Theorem~\ref{Th:Nulllagrangian4}]
The claim follows immediately from Lemma~\ref{Le:Exchange_IE} using the linearity of the Euler operator and taking account of the equivalence~\cite[p.~252]{Olver1986}:
\begin{equation}
\label{Eq:Olvers_equivalence}
E(L)=0 \quad \Leftrightarrow \quad \exists \, P: \, L=\mathrm{Div} \, P
\end{equation}
for a local Lagrangian defined on $X \times U^{(n)}$ and a $p$-tuple~$P$ of smooth functions depending on $x,u$ and derivatives of $u$.
\end{proof}

The following corollary characterises the subset of null Lagrangians on $\Omega$ for which both summands in Eq.~\eqref{Eq:Euler_Lagrange_equation_new} vanish separately.
\begin{corollary}
\label{Th:Nulllagrangian3}
A non-local Lagrangian $L_{12}=L_{12}(x,u^{(n)},y,v^{(m)})$ satisfies
\begin{equation}
\label{Eq:Euler_Lagrange_equation2_new}
\mathcal{I}_{\Omega}^{2}(E^{1}(L_{12})) =
\mathcal{I}_{\Omega}^{1}(E^{2}(L_{12})) =0
\end{equation}
for all $x,u,y,v$ and derivatives of $u$ and $v$ if and only if there are $p$-tuples $P^{1}$,~$P^{2}$ of smooth functions such that
\begin{equation}
\label{Eq:Nulllagrangian_nloc_3_1}
\mathcal{I}_{\Omega}^{2}(L_{12})=\mathrm{Div} \, P^{1} (y,u_{2}^{(n)}), 
\end{equation}
and 
\begin{equation}
\label{Eq:Nulllagrangian_nloc_3_2}
\mathcal{I}_{\Omega}^{1}(L_{12})=\mathrm{Div} \, P^{2} (y,u_{2}^{(m)}).
\end{equation} 
\end{corollary}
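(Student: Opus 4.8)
The plan is to reduce the statement to the local situation in exactly the way the proof of Theorem~\ref{Th:Nulllagrangian4} does, but now keeping the two summands of the Euler--Lagrange equation~\eqref{Eq:Euler_Lagrange_equation} separate instead of collapsing them into the symmetric combination $E^{2}(L_{12}+L_{21})$. The crucial observation is that the two requirements in~\eqref{Eq:Euler_Lagrange_equation2_new} are independent and that each of them involves only a single Euler operator acting before a single integration, so each can be handled on its own by the very same two ingredients used before: the commutation of Lemma~\ref{Le:Exchange_IE} and the local equivalence~\eqref{Eq:Olvers_equivalence}.

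First I would apply Lemma~\ref{Le:Exchange_IE} to each equation in~\eqref{Eq:Euler_Lagrange_equation2_new}, rewriting $\mathcal{I}_{\Omega}^{2}(E^{1}(L_{12}))=E(\mathcal{I}_{\Omega}^{2}(L_{12}))$ and $\mathcal{I}_{\Omega}^{1}(E^{2}(L_{12}))=E(\mathcal{I}_{\Omega}^{1}(L_{12}))$. After one base point has been integrated out, each of $\mathcal{I}_{\Omega}^{2}(L_{12})$ and $\mathcal{I}_{\Omega}^{1}(L_{12})$ is a smooth local Lagrangian on the jet space of the single surviving variable, so that the outer $E$ is the ordinary local Euler operator. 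The two conditions in~\eqref{Eq:Euler_Lagrange_equation2_new} therefore read $E(\mathcal{I}_{\Omega}^{2}(L_{12}))=0$ and $E(\mathcal{I}_{\Omega}^{1}(L_{12}))=0$. At this point I would invoke the biconditional~\eqref{Eq:Olvers_equivalence} for each integrated Lagrangian separately: $E(\mathcal{I}_{\Omega}^{2}(L_{12}))=0$ holds for all field values and their derivatives if and only if $\mathcal{I}_{\Omega}^{2}(L_{12})$ is a total divergence $\mathrm{Div}\,P^{1}$ in the surviving base coordinate, and analogously $E(\mathcal{I}_{\Omega}^{1}(L_{12}))=0$ if and only if $\mathcal{I}_{\Omega}^{1}(L_{12})=\mathrm{Div}\,P^{2}$. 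Since~\eqref{Eq:Olvers_equivalence} is an equivalence, both directions of the corollary follow simultaneously, delivering precisely~\eqref{Eq:Nulllagrangian_nloc_3_1} and~\eqref{Eq:Nulllagrangian_nloc_3_2}.

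Rather than a genuine obstacle, the point that I expect to require the most care is the bookkeeping of which base variable survives each integration and with respect to which coordinates the corresponding divergence is formed: $\mathcal{I}_{\Omega}^{2}$ removes the second base point and leaves a local Lagrangian in the first, while $\mathcal{I}_{\Omega}^{1}$ does the reverse, so the two potentials $P^{1}$ and $P^{2}$ are defined on different jet spaces. I would also verify explicitly that the hypotheses of Lemma~\ref{Le:Exchange_IE} are in force, namely that the assumed smoothness of $L_{12}$ together with the independence of the two sets of coordinates lets the integration over one base point commute with the partial and total derivatives defining the other Euler operator; this is exactly the place where the regularity assumptions on $L_{12}$ enter, and once they are granted the result is immediate.
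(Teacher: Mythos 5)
Your proof is correct and takes essentially the same route as the paper: the paper dismisses this corollary in one line as a special case of Theorem~\ref{Th:Nulllagrangian4}, whose proof is exactly your combination of Lemma~\ref{Le:Exchange_IE} with the local equivalence~\eqref{Eq:Olvers_equivalence}, here applied to each summand of the Euler--Lagrange equation separately. Your write-up simply makes explicit the per-summand application and the bookkeeping of which base variable survives each integration.
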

\begin{proof}
This is a special case of Theorem~\ref{Th:Nulllagrangian4}.
\end{proof}
The null Lagrangians characterised by Theorem~\ref{Th:Nulllagrangian4} and Corollary~\ref{Th:Nulllagrangian3} depend on the choice of the domain~$\Omega$. This prevents the deduction of an explicit expression for $L_{12}$. However, we have been able to eliminate the Euler operator to obtain direct conditions for the Lagrangians which characterise the equivalence class of non-local Lagrangians. 

\section{Null Lagrangians on arbitrary $\boldsymbol{\Omega' \subset \Omega}$}
\label{Sec:Null_Lagrangians_for_all}

If we demanded the Lagrangians to be null on any domain and in particular on any subdomain $\Omega' \subset \Omega$, already the integrand of the Euler-Lagrange equations~\eqref{Eq:Euler_Lagrange_equation} would need to vanish. The resulting conditions would be, thus, independent of the region under investigation and not involve boundary conditions. First, we will study the case when both summands vanish separately. The corresponding null Lagrangians, which will take a closed form here, are characterised by the following theorem.
\begin{theorem}
\label{Th:Nulllagrangian1}
A non-local Lagrangian $L_{12}=L_{12}(x,u^{(n)},y,v^{(m)})$ satisfies 
\begin{equation}
\label{Eq:Simple_EL2}
E^{1}(L_{12})=E^{2}(L_{12})=0 
\end{equation}
for all $x,u,y,v$ and derivatives of $u$ and $v$ if and only if there exists an antisymmetric $2p \times 2p$-matrix $Q_{12}$ with entries being smooth functions of $x,u,y,v$ and derivatives of $u$ and $v$ such that
\begin{equation*}
L_{12}=\frac{1}{2}\sum_{i,j=1}^{2p} \mathfrak{D_{i}}\mathfrak{D_{j}}(M \cdot Q_{12})_{ij}, 
\end{equation*}
\begin{equation}
\label{Eq:Nulllagrangian_nloc_1}
\mathfrak{D}_{i} \coloneqq 
\begin{cases}
\Diff_{i}^{1}, \,\,\,\,\; \quad i \in \{1,\ldots,p\} \\
\Diff_{i-p}^{2}, \quad i \in \{p+1,\ldots,2p\}
\end{cases},
\end{equation}
where $M=\mathrm{diag}(\mathbb{1}_{p},-\mathbb{1}_{p})$.
\end{theorem}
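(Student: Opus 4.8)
The plan is to reduce both directions to the local equivalence \eqref{Eq:Olvers_equivalence}, applied \emph{separately} to the two independent blocks of jet variables, using throughout that any operation assembled from the first variables commutes with $E^{2}$ (and vice versa), exactly in the spirit of Lemma~\ref{Le:Exchange_IE}. First I would unpack the right-hand side. Writing $Q_{12}$ in $p\times p$ blocks $\begin{pmatrix} A & B \\ C & D\end{pmatrix}$, antisymmetry forces $A^{T}=-A$, $D^{T}=-D$ and $C=-B^{T}$. Since the operators $\mathfrak{D}_{i}$ mutually commute, $\mathfrak{D}_{i}\mathfrak{D}_{j}$ is symmetric in $i,j$, so its contraction with the antisymmetric diagonal blocks $A,D$ (the sign in $M=\mathrm{diag}(\mathbb{1}_{p},-\mathbb{1}_{p})$ being irrelevant) vanishes, while the two off-diagonal blocks contribute equally and cancel the prefactor $\tfrac{1}{2}$. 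Hence the asserted form is equivalent to
\begin{equation*}
L_{12}=\sum_{a,b=1}^{p}\Diff_{a}^{1}\Diff_{b}^{2}B_{ab},
\end{equation*}
i.e.\ that $L_{12}$ is a \emph{mixed} double divergence, one $\Diff^{1}$ and one $\Diff^{2}$, of an arbitrary $p\times p$ array $B$ of smooth functions.

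Sufficiency is then immediate: the grouping $L_{12}=\Diff_{a}^{1}\bigl(\sum_{b}\Diff_{b}^{2}B_{ab}\bigr)$ exhibits $L_{12}$ as a total $\Diff^{1}$-divergence, so $E^{1}(L_{12})=0$ by \eqref{Eq:Olvers_equivalence} applied in the first variables (with $y,v^{(m)}$ treated as smooth parameters), and the symmetric grouping $L_{12}=\Diff_{b}^{2}\bigl(\sum_{a}\Diff_{a}^{1}B_{ab}\bigr)$ gives $E^{2}(L_{12})=0$.

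For necessity I would start from $E^{1}(L_{12})=0$ and apply \eqref{Eq:Olvers_equivalence} in the first variables to produce a $p$-tuple $P$ with $L_{12}=\sum_{a}\Diff_{a}^{1}P_{a}$. The key is to arrange $E^{2}(P_{a})=0$. I would use that the inverse-divergence (homotopy) operator furnishing $P$ from $L_{12}$ is built solely from first-variable operations---the partial derivatives $\partial/\partial u_{J}^{\alpha}$, the total derivatives $\Diff^{1}$, and an integration over a scaling parameter of the first fields---each of which commutes with $E^{2}$. Consequently $E^{2}(P_{a})$ equals that operator applied to $E^{2}(L_{12})=0$, so $E^{2}(P_{a})=0$. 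A second application of \eqref{Eq:Olvers_equivalence}, now in the second variables, yields $P_{a}=\sum_{b}\Diff_{b}^{2}B_{ab}$, whence $L_{12}=\sum_{a,b}\Diff_{a}^{1}\Diff_{b}^{2}B_{ab}$, which I repackage by taking the off-diagonal blocks of $Q_{12}$ from $B$ (the freedom in the vanishing diagonal blocks reflecting the non-uniqueness of $Q_{12}$).

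The main obstacle is precisely this intermediate step in the necessity direction: the local equivalence alone produces a potential $P$ only for the first Euler condition, and one must show it can be chosen compatibly with the second. The cleanest route is the commuting homotopy just described, which rests entirely on the independence of the two variable sets. A more pedestrian alternative would check that $E^{2}(P_{a})$ satisfies the Helmholtz conditions in the second variables and depends only on them, hence is itself a second-variable Euler-Lagrange expression that can be absorbed into a redefinition of $P$; I would avoid this heavier route in favour of the homotopy argument.
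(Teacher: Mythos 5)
Your proposal is correct, but it takes a genuinely different route from the paper's proof. The paper never extracts a normal form: it rewrites \eqref{Eq:Simple_EL2}, via \eqref{Eq:Olvers_equivalence} applied once in each set of variables, as $L_{12}=\mathrm{Div}_{1}\,P^{1}=\mathrm{Div}_{2}\,P^{2}$, averages these into $L_{12}=\frac{1}{2}\mathfrak{Div}\,(P^{1},P^{2})$ subject to the constraint $\mathfrak{Div}\,(P^{1},-P^{2})=0$, and then disposes of the constraint in one stroke by citing the characterisation of null divergences \eqref{Eq:Kernel_Divergence}, which delivers the antisymmetric $Q_{12}$ directly. You instead observe first that the diagonal blocks of $Q_{12}$ contribute nothing (antisymmetric blocks contracted against the symmetric operator $\mathfrak{D}_{i}\mathfrak{D}_{j}$), so that the asserted form is equivalent to the mixed double divergence $L_{12}=\sum_{a,b}\Diff_{a}^{1}\Diff_{b}^{2}B_{ab}$ with $B$ arbitrary --- a reduction the paper does not make, and which renders sufficiency immediate --- and you obtain necessity by iterating \eqref{Eq:Olvers_equivalence}, with the commutation of the first-variable homotopy operator with $E^{2}$ supplying the compatibility of the two applications; your identification of this compatibility as the crux is exactly right, and the commutation holds for the reason you give (all ingredients of the homotopy act only on the first variable set), in the same spirit as Lemma~\ref{Le:Exchange_IE}. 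As for what each approach buys: the paper's argument is shorter given the cited kernel theorem, but that theorem is proved for jets over a single base $\mathbb{R}^{2p}$ on which the fields depend on all coordinates, and its transplantation to the product-jet setting (where $u$ depends only on $x$ and $v$ only on $y$, so one must know $Q_{12}$ can be taken free of mixed objects such as $y$-derivatives of $u$) is glossed over; your route establishes the result entirely within the product structure and, as a bonus, exhibits the diagonal blocks of $Q_{12}$ as pure gauge. One small supplement to your sketch: besides the field-scaling homotopy, the $u$-independent remainder $L_{12}(x,0,y,v^{(m)})$ must be written as a $\mathrm{Div}_{1}$ by a radial homotopy in $x$; since that operator is likewise assembled from first-variable operations only, it too commutes with $E^{2}$, and your argument goes through unchanged.
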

\begin{proof}
Note that Eq.~\eqref{Eq:Simple_EL2} is equivalent to (cf.~statement~\eqref{Eq:Olvers_equivalence})
\begin{align}
\label{Eq:Div1andDiv2condition}
L_{12}=&\mathrm{Div}_{1} \, P^{1}(x,u^{(n)},y,v^{(m)}) \notag\\
=&\mathrm{Div}_{2} \, P^{2}(x,u^{(n)},y,v^{(m)}),
\end{align}
where
\begin{equation}
\mathrm{Div}_{k} \, P^{k}=\Diff_{1}^{k}P_{1}^{k}+\cdots+\Diff_{p}^{k}P_{p}^{k}, \quad k \in \{ 1,2 \}
\end{equation}
for $p$-tuples $P^{1}$ and $P^{2}$ of smooth functions. We write the total divergence on $\mathbb{R}^{p} \oplus \mathbb{R}^{p}$ as
\begin{equation}
\mathfrak{Div} \, (P^{1},P^{2}) \coloneqq \mathrm{Div}_{1} \, P^{1}+ \mathrm{Div}_{2} \, P^{2}.
\end{equation}
Thus, the two conditions~\eqref{Eq:Div1andDiv2condition} can equivalently be written as 
\begin{equation}
\label{Eq:Nullcondition1}
L_{12}=\frac{1}{2}\mathfrak{Div} \, (P^{1},P^{2})
\end{equation}
for
\begin{equation}
\label{Eq:Nullcondition2}
\mathfrak{Div} \, (P^{1},-P^{2})=0.
\end{equation}
``$\Rightarrow$'': \\
Consider the involutory $2p \times 2p$-matrix $M=\mathrm{diag}(\mathbb{1}_{p},-\mathbb{1}_{p})$. The $2p$-tuple $(P^{1},-P^{2})=M \cdot (P^{1},P^{2})$ is in the kernel of $\mathfrak{Div}$ if and only if there exists a matrix $Q_{12}$ as described in Theorem~\ref{Th:Nulllagrangian1} such that~\cite{Olver1983},~{\cite[p.~269]{Olver1986}}  
\begin{equation}
\label{Eq:Kernel_Divergence}
(P^{1},P^{2})_{i}=\sum_{j=1}^{2p} \mathfrak{D}_{j}(M \cdot Q_{12})_{ij}.
\end{equation}
Hence, we can conclude in the face of Eq.~\eqref{Eq:Nullcondition1}
\begin{equation}
L_{12}=\frac{1}{2}\sum_{i,j=1}^{2p} \mathfrak{D}_{i}\mathfrak{D}_{j}(M \cdot Q_{12})_{ij}.
\end{equation}
``$\Leftarrow$'': \\
By definition, $L_{12}$ in Eq.~\eqref{Eq:Nulllagrangian_nloc_1} is of the form in condition~\eqref{Eq:Nullcondition1} with the tuples 
\begin{equation}
P^{1}=\left( \sum_{j=1}^{2p} \mathfrak{D}_{j} (Q_{12})_{ij} \right)_{i=1,\ldots,p},
\end{equation}
\begin{equation}
P^{2}=\left( \sum_{j=1}^{2p} \mathfrak{D}_{j} (-Q_{12})_{ij} \right)_{i=p+1,\ldots,2p}
\end{equation}
which in turn satisfy condition~\eqref{Eq:Nullcondition2} because of the if-and-only-if statement about null divergences~\eqref{Eq:Kernel_Divergence}.  
\end{proof}
\begin{example}
Consider the kinetic term $L_{12}=\frac{1}{2} \partial_{x}\phi(x) \partial_{y}\phi(y)$ that can be written in the form of Eq.~\eqref{Eq:Nulllagrangian_nloc_1} with
\begin{equation*}
Q_{12}=\frac{1}{2}\phi(x) \phi(y)\left( 
\begin{array} {cc}
0 & 1 \\
-1 & 0
\end{array} \right)
\end{equation*}
and, thus, it is null.
\end{example}
Finally, we will characterise the subset of null Lagrangians that satisfy the full integrand~\eqref{Eq:Antisymcond1} of the Euler-Lagrange equations which clearly includes Lagrangians of the form~\eqref{Eq:Nulllagrangian_nloc_1}. 
\begin{theorem}
\label{Th:Nulllagrangian2}
A non-local Lagrangian $L_{12}=L_{12}(x,u^{(n)},y,v^{(m)})$ is null on any $\Omega' \subset \Omega$ if and only if there exists an antisymmetric $2p \times 2p$-matrix $Q_{12}$ with entries being smooth (and symmetric) functions $(Q_{12})_{ij}=(Q_{21})_{ij}$ of $x,u,y,v$ and derivatives of $u$ and $v$ such that
\begin{equation}
\label{Eq:Nulllagrangian_nloc_2}
(L_{12}+L_{21})(x,u^{(n)},y,v^{(m)})=\sum_{i,j=1}^{2p} \mathfrak{D_{i}}\mathfrak{D_{j}}(M \cdot Q_{12})_{ij},
\end{equation}
where $M=\mathrm{diag}(\mathbb{1}_{p},-\mathbb{1}_{p})$.
\end{theorem}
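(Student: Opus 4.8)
The plan is to apply Theorem~\ref{Th:Nulllagrangian1} not to $L_{12}$ directly but to its symmetric combination $S\coloneqq L_{12}+L_{21}$. As observed at the beginning of this section, requiring nullity on \emph{every} subdomain $\Omega'\subset\Omega$ forces the integrand of Eq.~\eqref{Eq:Euler_Lagrange_equation} to vanish pointwise, which by Eqs.~\eqref{Eq:E1L12_E2L21}--\eqref{Eq:E2L12_E1L21} is exactly the single condition $E^2(L_{12}+L_{21})=0$ of Eq.~\eqref{Eq:Antisymcond1}. This is the hypothesis I would start from for the ``$\Rightarrow$'' direction.

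The key simplification is that $S$ is invariant under the permutation operator, $S_{12}=S_{21}$. Feeding $S$ into the identity~\eqref{Eq:E1L12_E2L21} and using this invariance gives $E^1(S)(y,v,x,u)=E^2(S)(x,u,y,v)$, so that $E^2(S)=0$ already implies $E^1(S)=0$. Thus $S$ meets the \emph{pair} of hypotheses $E^1(S)=E^2(S)=0$ of Theorem~\ref{Th:Nulllagrangian1}, which supplies an antisymmetric matrix $\hat{Q}$ with $S=\tfrac12\sum_{i,j}\mathfrak{D}_i\mathfrak{D}_j(M\hat{Q})_{ij}$. Setting $R\coloneqq\tfrac12\hat{Q}$ absorbs the prefactor, so that $S=\sum_{i,j}\mathfrak{D}_i\mathfrak{D}_j(MR)_{ij}$ already has the shape of~\eqref{Eq:Nulllagrangian_nloc_2}, leaving only the symmetry of the matrix to be arranged. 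The converse is then immediate: if $S$ has the form~\eqref{Eq:Nulllagrangian_nloc_2}, then $S=\tfrac12\sum_{i,j}\mathfrak{D}_i\mathfrak{D}_j(M\cdot 2Q_{12})_{ij}$ is of the shape characterised by Theorem~\ref{Th:Nulllagrangian1}, so $E^1(S)=E^2(S)=0$; in particular $E^2(L_{12}+L_{21})=0$ and $L_{12}$ is null on every $\Omega'$.

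The step I expect to be the main obstacle is arranging that $Q_{12}$ obeys the extra symmetry $(Q_{12})_{ij}=(Q_{21})_{ij}$, i.e.~that its entries are invariant under exchanging the two base points, since the matrix delivered by Theorem~\ref{Th:Nulllagrangian1} need not have this property. To handle it I would track how $\mathcal{P}$ acts on the representation of $S$: it permutes the operators $\mathfrak{D}_i$ by the block-swap $\Sigma=\bigl(\begin{smallmatrix}0&\mathbb{1}_p\\ \mathbb{1}_p&0\end{smallmatrix}\bigr)$ (which sends $\Diff^1_i\leftrightarrow\Diff^2_i$) and swaps the two points in the entries. Relabelling the summation indices and using the conjugation identity $\Sigma M\Sigma=-M$ together with $\Sigma^2=\mathbb{1}_{2p}$ shows that the matrix $T(R)\coloneqq\mathcal{P}\bigl(-\Sigma R\Sigma\bigr)$ (entrywise point swap) represents the \emph{same} symmetric $S$.

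Because $T$ preserves matrix antisymmetry and squares to the identity --- the two sign flips, from $\Sigma M\Sigma=-M$ and from $\mathcal{P}^2=\mathrm{id}$, $\Sigma^2=\mathbb{1}_{2p}$, all cancel --- the average $Q_{12}\coloneqq\tfrac12\bigl(R+T(R)\bigr)$ is again antisymmetric, still represents $S$, and is a fixed point of $T$, which is precisely the required relation $(Q_{12})_{ij}=(Q_{21})_{ij}$. The only genuine work is thus the bookkeeping of how $\mathcal{P}$ permutes the double index $(i,j)$ through $\Sigma$ and produces the sign from $M$; once that is pinned down, the symmetrisation closes the ``$\Rightarrow$'' direction and the equivalence follows.
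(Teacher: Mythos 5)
Your overall route is the paper's own: nullity on every $\Omega'\subset\Omega$ forces the integrand condition~\eqref{Eq:Antisymcond1}; the $\mathcal{P}$-invariance of $S=L_{12}+L_{21}$ together with the identity~\eqref{Eq:E1L12_E2L21} upgrades $E^{2}(S)=0$ to the pair $E^{1}(S)=E^{2}(S)=0$; Theorem~\ref{Th:Nulllagrangian1} applied to $S$ then gives $S=\tfrac12\sum_{i,j}\mathfrak{D}_i\mathfrak{D}_j(M\hat{Q})_{ij}$ with the prefactor absorbed by rescaling, and your converse direction is sound. This is exactly what the paper means by ``repeating the argumentation of the proof of Theorem~\ref{Th:Nulllagrangian1}'', and your bookkeeping of the $\mathcal{P}$-action is correct up to the very last line: $\mathcal{P}$ conjugates the derivative indices by $\Sigma=\bigl(\begin{smallmatrix}0&\mathbb{1}_p\\ \mathbb{1}_p&0\end{smallmatrix}\bigr)$, the sign $\Sigma M\Sigma=-M$ makes $T(R)=\mathcal{P}(-\Sigma R\Sigma)$ a second antisymmetric representative of $S$, $T^{2}=\mathrm{id}$, and the average $Q=\tfrac12(R+T(R))$ is $T$-fixed.

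The genuine gap is the final identification. $T(Q)=Q$ unpacks to $(Q_{21})_{ij}=-(Q_{12})_{\sigma(i)\sigma(j)}$, with $\sigma$ the block swap, \emph{not} to the theorem's $(Q_{21})_{ij}=(Q_{12})_{ij}$. Since $\mathfrak{D}_i\mathfrak{D}_j$ sees only the symmetric part of $MQ$, the diagonal $p\times p$ blocks of $Q$ drop out and the content sits entirely in the upper-right block $B_{ab}\coloneqq(Q_{12})_{a,\,b+p}$, via $L_{12}+L_{21}=2\sum_{a,b=1}^{p}\Diff_{a}^{1}\Diff_{b}^{2}B_{ab}$; your $T$-fixed $Q$ yields the swap-plus-transpose relation $B_{ab}(x,u^{(\cdot)},y,v^{(\cdot)})=B_{ba}(y,v^{(\cdot)},x,u^{(\cdot)})$, whereas the theorem asserts the transpose-free $\mathcal{P}B=B$. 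The two coincide for $p=1$, but for $p\ge2$ the matrix-antisymmetric, $\mathcal{P}$-odd part of $B$ survives your averaging, and removing it requires the gauge freedom of the representation (matrices $N$ with $\sum_{a,b}\Diff_{a}^{1}\Diff_{b}^{2}N_{ab}=0$), which your argument never invokes. Concretely, for $p=2$, $q=1$, writing $u_{a}=\partial u/\partial x^{a}$, $v_{a}=\partial v/\partial y^{a}$: the block $B=\bigl(\begin{smallmatrix}0&g\\-g&0\end{smallmatrix}\bigr)$ with $g=u_{1}v-uv_{1}$ is $T$-fixed but not entrywise $\mathcal{P}$-invariant; the same Lagrangian $2(\Diff_{1}^{1}\Diff_{2}^{2}-\Diff_{2}^{1}\Diff_{1}^{2})g$ happens to admit the $\mathcal{P}$-invariant representative $B'_{11}=-2(uv_{2}+u_{2}v)$, $B'_{12}=B'_{21}=u_{1}v+uv_{1}$, $B'_{22}=0$, but producing such a re-gauging in general is precisely the missing step. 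As written, you prove~\eqref{Eq:Nulllagrangian_nloc_2} with the weaker symmetry $\mathcal{P}Q_{12}=-\Sigma Q_{12}\Sigma$; the stated $(Q_{12})_{ij}=(Q_{21})_{ij}$ needs an additional argument --- one that, in fairness, the paper's own one-sentence remark (invariance of the left-hand side ``must then also apply to the right-hand side'') glosses over as well.
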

\begin{proof}
Again, using the local statement~\eqref{Eq:Olvers_equivalence} and repeating the argumentation of the proof of Theorem~\ref{Th:Nulllagrangian1}, Eq.~\eqref{Eq:Antisymcond1} is equivalent to Eq.~\eqref{Eq:Nulllagrangian_nloc_2}. Note that the left-hand side of Eq.~\eqref{Eq:Nulllagrangian_nloc_2} is invariant under permutation with $\mathcal{P}$ which must then also apply to the right-hand side.
\end{proof}

\section{Conclusions}
\label{Sec:Conclusions}

Applying calculus of variations, we have derived the Euler-Lagrange equations for a non-local action corresponding to two points of a region in Euclidean space. The nature of non-locality is reflected in the fact that those are integro-differential equations. A generalisation of this derivation can be found in~\cite{Kegeles2016a}. 

Null Lagrangians are defined to be those functionals that satisfy the Euler-Lagrange equations for all points of the product of the underlying jet spaces. From a physical point of view, two theories on a shared domain are equivalent and indistinguishable if the defining two Lagrangians differ by a null Lagrangian on the very domain. 

By the use of the known local statement that classifies null Lagrangians to be divergences, we have derived integral equations for a fixed domain equivalent to the Euler-Lagrange equations whereby null Lagrangians can either be constructed or identified as null. In addition, we have given explicit expressions for Lagrangians that are null on any subdomain. The introductory discussion has briefly illustrated an interesting physical implication: local problems might be equivalently described by non-local terms. If a system features special conditions for field values on the boundary of the considered region $\Omega$, like periodicity, it is not possible to distinguish a local and non-local description if the non-local terms are null. However, on a different region $\Omega'$ which is, compared to $\Omega$, deformed, resized or equipped with different boundary conditions, the non-local terms will in general contribute to the equations of motion and be non-null. Then the class of null Lagrangians is reduced to the set of Lagrangians that are null on any $\Omega'$ and are of a higher-dimensional divergence form.  

This paper completes the former work~\cite{Kegeles2016a,Kegeles2016b} regarding divergence symmetries. While we have worked out the simplest non-local case, a generalisation to arbitrary non-local theories on possibly $N$ copies of a single jet space is straightforward. An extension to non-Euclidean space would be interesting and would provide an inclusion of divergence symmetries into Noether's theorem in full generality.    

\section*{Acknowledgements}

This work was part of my Master's thesis. I thank Alexander Kegeles for his help, patience and all the fruitful afternoons we spent discussing the topic and its technical issues. I am deeply grateful for his careful and thorough revisions of my notes and his great contributions to the work. Moreover, I thank the entire Quantum Gravity division at AEI~Potsdam, and notably Daniele~Oriti for advising me during the project.

\onecolumngrid

\end{document}